\pgfplotsset{compat=newest}
\newcommand\footnoteref[1]{\protected@xdef\@thefnmark{\ref{#1}}\@footnotemark}
\newenvironment{mymatrix}{\begin{bmatrix}} {\end{bmatrix} }
\def\ve#1{{\mathchoice{\mbox{\boldmath$\displaystyle #1$}}
              {\mbox{\boldmath$\textstyle #1$}}
              {\mbox{\boldmath$\scriptstyle #1$}}
              {\mbox{\boldmath$\scriptscriptstyle #1$}}}}
\newcommand{\mycode}[1]{\ensuremath{\mathcal{#1}}}
\renewcommand{\H}{\ve{H}}
\renewcommand{\P}{\ve{P}}
\newcommand{\F}{\mathbb{F}}
\newtheorem{lem}{Lemma}
\newtheorem{prop}{Proposition}
\newtheorem{examplex}{Example}
\newtheorem{cor}{Corollary}
\newtheorem{const}{Construction}
\newtheorem{thmmystyle}{Theorem}
\newtheorem{rem}{Remark}
\newcommand{\removelatexerror}{\let\@latex@error\@gobble}
\newcommand*\fullcirc[1][1ex]{\tikz\fill (0,0) circle (#1);}
\newcommand*\emptycirc[1][1ex]{\tikz\draw (0,0) circle (#1);} 
\newcommand\bigzero{\makebox(0,0){\text{\huge0}}}
\newcommand{\Pdmc}{Defective }
\newcommand{\pdmc}{defective }
\newcommand{\psmc}{PDMC }
\newcommand*{\rom}[1]{\expandafter\@slowromancap\romannumeral #1@}
			\edef\arga{\thisrow{#1}}
			\edef\argb{#2}
\begin{document}
\title{Codes for Preventing Zeros at \\Partially \Pdmc Memory Positions\vspace{-0.4 cm}}
\author{\IEEEauthorblockN{Haider Al Kim$^{1,2}$\thanks{
This work has received funding from the German Academic Exchange Service (Deutscher Akademischer Austauschdienst, DAAD) under the support program ID 57381412, and the European Union's Horizon 2020 research and innovation program through the Marie Sklodowska-Curie under Grant No.~713683 and No.~801434.
}, Kai Jie Chan$^{3}$, }
\IEEEauthorblockA{
  $^1$Institute for Communications Engineering, Technical University of Munich (TUM), Germany\\ $^2$Electronic and Communications Engineering, University of Kufa (UoK), Iraq\\
  $^3$Singapore Institute of Technology (SiT) and TUM-Asia, Singapore\\
  Email: haider.alkim@tum.de, CHAN KAI JIE <1802319@sit.singaporetech.edu.sg>}
  \vspace{-1cm}}
\maketitle
\begin{abstract}
This work deals with error correction for non-volatile memories that are partially defective at some levels. Such memory cells can only store incomplete information since some of their levels cannot be utilized entirely due to, e.g., wearout. On top of that, this paper corrects random errors $t\geq 1$ that could happen among $u$ partially defective cells while preserving their constraints. 
First, we show that the probability of violating the partially defective cells' restriction due to random errors is not trivial. Next, we update the models in \cite{haideralkim2019psmc} such that the coefficients of the output encoded vector plus the error vector at the partially \pdmc positions \emph{are non-zero}.
Lastly, we state a simple proposition (Proposition~\ref{Proposition_3}) for masking the partial defects using a code with a minimum distance $d$ such that $d\geq 2(u+t)+1$. "Masking" means selecting a word whose entries correspond to writable levels in the (partially) defective positions.
A comparison shows that masking $u$ cells by this proposition for a particular BCH code is as effective as using the complicated coding scheme proven in \cite[Theorem~1]{haideralkim2019psmc}.
\end{abstract}
\begin{IEEEkeywords}
flash memories, phase change memories, non-volatile memories, defective memory, (partially) \pdmc cells, stuck-at errors, error correction codes, BCH codes \vspace{-0.1cm}
\end{IEEEkeywords}
\section{Introduction}
The growing demand for energy-efficient memory solutions that exhibit short processing times has fueled the adoption of non-volatile memory. 
Research has shown that phase change memories (PCMs), a type of non-volatile memory, are highly efficient, thus making them a viable replacement for current storage technology such as DRAM in the foreseeable future. A distinctive feature of a PCM cell is its ability to alter between two main states, namely, an amorphous state and a crystalline state, which also directly correspond to the OFF and ON states or logic '0' and '1', respectively. The crystalline state is further defined by its multi-programmed levels. Due to degradation caused by the heating and cooling processes of the cell, PCMs may be unable to change their states. In such an event, the cell is termed as a \emph{\pdmc memory cell} since it can only store a single phase. In multi-level PCMs, e.g., dual PCMs, due to the thermal processes of the cells, failure may happen at a position in between both primary states or in the partially crystalline levels. Thus, the cell is called a \textit{partially \pdmc memory cell} \cite{Gleixner2009,Kim2005,Lee2009,Pirovano2004}.
In flash memory, different amounts of charge are used to represent the various levels within the cells. If one wants to overwrite the stored information, one could either raise the \pdmc\unskip-at level or reset the whole cell to its original amorphous state. Since the latter method cuts the lifespan of such memory devices, this paper considers only raising the level using a mechanism called \textit{masking}. This technique ensures that memory is correctly utilized in the faulty cells by checking that a codeword fits the \pdmc level within the cells. 
\subsection{Related Work}
Code constructions capable of masking \pdmc memory cells and correcting errors during the storing and reading procedure are put forth in \cite{heegard1983partitioned}. However, \cite{heegard1983partitioned} considers classical defects and errors correcting code constructions, so the author suggested a relatively large redundancy, i.e., the required check symbols are at least the number of defects.  
Later, reductions in the redundancy needed for masking \textit{partially} \pdmc cells are achieved in \cite{wachterzeh2016codes}. Nevertheless, this paper does not regard error correction on top of masking. 
A more recent scenario in \cite{haideralkim2019psmc} considers code constructions for simultaneous masking of partially \pdmc cells and error correction by synthesizing techniques from \cite{heegard1983partitioned} and \cite{wachterzeh2016codes}. 
The code constructions in \cite{haideralkim2019psmc} consider a scenario of $q$ levels memory cells in which all partially defective cells are stuck at the level $1$, so zeros are forbidden in these positions. They are formed under the supposition that when random errors occur in the partially \pdmc cells, they concede to the partially \pdmc constraints.  
However, that is not always guaranteed to happen since it is merely an \textit{idealized} assumption. Let $c_i$ be the resulting coordinate, after the encoding process in \cite[Theorem~1]{haideralkim2019psmc}, at the $i$-th position where the partially defective cell is, and let $e_j$ be an error value that happens at location $j=i$. The authors in \cite{haideralkim2019psmc} then assume that $c_i+e_j \neq 0$. 
Ignoring this assumption and given that the calculations are done in the finite field $\mathbb{F}_q$, we end up with $c_i+e_j=0$ for $c_i=q-1$ and (coincidentally) $e_j=1$ 
if $e_j$ occurs 
before writing to the memory of partial defects (see Figure \ref{fig:blockdiagramwithlegend}). Therefore, although the encoding algorithm successfully provides a vector that matches the partial defects, the storing process might fail to present a vector that can be properly placed on that memory due to random errors, or the reading process might be unsuccessful due to \emph{mag-1} error (magnitude error as defined in Section~\ref{mag-1} ) \cite{ASolomon}.
\begin{figure}[h]
	\begin{subfigure}[b]{\linewidth}
	\scalebox{0.7}{
	\begin{tabular}{l*{1}{c}}
		Legend    &        \\
		\hline
		$\ve{m}$:  message vector & \\
		$\ve{G}_1$:   generator matrix &\\
		$\ve{w} = \ve{m} \cdot \ve{G}_1$: augmented message vector &  \\
		$\ve{\phi}$: partially stuck positions, where $i \in \ve{\phi}$ &  \\
		$\Psi$: error positions, where $j \in \Psi$ & \\
		$\ve{d} = z_0\cdot \ve{G}_0$: masking vector &  \\
		$\ve{c} = \ve{w}+ \ve{d}$: output codeword &    \\
		$\ve{e}$: error vector &   \\
		$\ve{y} = \ve{c}+\ve{e} $: corrupted codeword & 
	\end{tabular}
}
	\end{subfigure}
	\begin{minipage}[b]{0.4\textwidth}
		\begin{subfigure}[b]{\linewidth}
			\begin{center} 
				\scalebox{0.65}{ \begin{tikzpicture}[auto, thick, >=triangle 45]
    \node[draw, thick, rectangle, minimum height = 2.5em, minimum width = 2.5em] at (1.4,0) (matrix) {$\boldsymbol{G}_1$};
    \node[draw, circle, minimum size=0.2cm] (adder1) at (3.3,0){};
    \draw (adder1.east) -- (adder1.west) (adder1.north) -- (adder1.south);
    \node[draw, circle, minimum size=0.2cm] (adder2) at (5.8,0){};
    \draw (adder2.east) -- (adder2.west) (adder2.north) -- (adder2.south);
    
    \coordinate (message) at (0,0);
    \coordinate (masking) at (3.3,1.4);
    \coordinate (error) at (5.8,1.4);
    \coordinate (corrupted) at (8.2,0);
    
    \draw[-latex](message) -- node {$\boldsymbol{m}$}(matrix);
    \draw[-latex](matrix) -- node {$\boldsymbol{w},w_{i}$}(adder1);
    \draw[-latex](adder1) -- node {$c_{i}\in\mathbb{F}_q\backslash\{0\}$}(adder2);
    \draw[-latex](masking) -- node {$d_{i}$}(adder1);
    \draw[-latex](error) -- node {$e_{j}\in\{0,1\}$}(adder2);
    \draw[-latex](adder2) -- node {$y_{i,j}\in\mathbb{F}_q\backslash\{0\}$}(corrupted);
    
    \draw[color=gray,thick](0.7,-0.7) rectangle (4,1.1);
	\node at (0.5,1.3) [above=5mm, right=0mm] {\textsc{encoder}};
	
	\draw[dashed] (2.4,-1) -- (2.4,1.7);
    \end{tikzpicture}}
				\caption{Assumption $c_i+e_j \neq 0$ in \cite{haideralkim2019psmc}.}
				\label{fig:blockdiagramA}
			\end{center}
		\end{subfigure}
		\\[\baselineskip]
		\begin{subfigure}[b]{\linewidth}
			\begin{center}
				\scalebox{0.65}{   \begin{tikzpicture}[auto, thick, >=triangle 45]
    \node[draw, thick, rectangle, minimum height = 2.5em, minimum width = 2.5em] at (1.4,0) (matrix) {$\boldsymbol{G}_1$};
    \node[draw, circle, minimum size=0.2cm] (adder1) at (3.3,0){};
    \draw (adder1.east) -- (adder1.west) (adder1.north) -- (adder1.south);
    \node[draw, circle, minimum size=0.2cm] (adder2) at (6.8,0){};
    \draw (adder2.east) -- (adder2.west) (adder2.north) -- (adder2.south);
    
    \coordinate (message) at (0,0);
    \coordinate (masking) at (3.3,1.4);
    \coordinate (error) at (6.8,1.4);
    \coordinate (corrupted) at (9.2,0);
    
    \draw[-latex](message) -- node {$\boldsymbol{m}$}(matrix);
    \draw[-latex](matrix) -- node {$\boldsymbol{w},w_{i}$}(adder1);
    \draw[-latex](adder1) -- node {$c_{i}\in\mathbb{F}_q\backslash\{0,q-1\}$}(adder2);
    \draw[-latex](masking) -- node {$d_{i}$}(adder1);
    \draw[-latex](error) -- node {$e_{j}\in\{0,1\}$}(adder2);
    \draw[-latex](adder2) -- node {$y_{i,j}\in\mathbb{F}_q\backslash\{0\}$}(corrupted);
    
    \draw[color=gray,thick](0.7,-0.7) rectangle (4,1.1);
	\node at (0.5,1.3) [above=5mm, right=0mm] {\textsc{encoder}};
	
	\draw[dashed] (2.4,-1) -- (2.4,1.7);
    \end{tikzpicture}}
				\caption{Guarantee $c_i+e_j \neq 0$ in this work.}
				\label{fig:blockdiagramB}
			\end{center}
		\end{subfigure}
	\end{minipage}
	\caption{Encoding process showing the masking and the disturbance stages.}
	\label{fig:blockdiagramwithlegend}
	\vspace{- 0.5cm}
\end{figure}
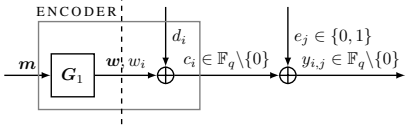
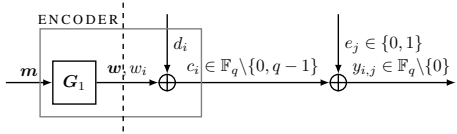
\subsection{Our Contribution}
In response to the aforementioned limitation, we want to ensure that the partially \pdmc constraint is always met by fulfilling $\{\forall i:1\leq c_i\leq q-2\}$, given the assumption that the error vector $\boldsymbol{e} \in \{0,1\}^n$. In this paper, we first show that the probability of the reverse assumption in \cite{haideralkim2019psmc} (the output vector coordinates at the partially stuck positions become zeros due to random errors) is non-trivial. Then, modified methods from \cite{haideralkim2019psmc} to obtain code constructions for jointly masking partially \pdmc cells and correcting errors have been conducted. This modification ensures that if the errors \emph{jointly happen at the partially \pdmc positions}, the partially \pdmc constraint is always satisfied.
The price of relaxing the restriction from \cite{haideralkim2019psmc} is the reduced $u$ of the updated versions of code constructions that this paper
can deal with, is precisely half compared to the constructions proven in \cite{haideralkim2019psmc}.
Although our method comes at the cost of masking capability as less number of $u$ cells can be treated than \cite{haideralkim2019psmc},
our work preserves the required redundancies as achieved by \cite{wachterzeh2016codes} and \cite{haideralkim2019psmc} for masking-only and masking-and-error-correcting, respectively, and it
is more realistic regarding physical memories with partial defects that also suffer from random errors.
\section{Preliminaries}
\subsection{Notations}\label{ssec:notation}
For a prime power $q$, $\mathbb{F}_{q}$ denotes the finite field of order $q$. 
To coincide with the notations in \cite{haideralkim2019psmc}, we let $k$ be the number of information symbols, $l$ be the required symbol(s) for masking, $r$ be the required redundancy for error correction, $t$ be the number of errors, and $u$ be the number of (partially) \pdmc cells. Let $d$ be the code minimum distance, and $n$ be the code length and also the memory size. 
$s$ denotes the (partially) \pdmc level at any position. 
In general, for positive integers $h,f$ we denote by $[h]$ the set of integers $\{0,1,\dots,h-1\}$ and $[h,f]=\{h,h+1,\dots,f-1\}$. Vectors and matrices are denoted by lowercase and uppercase boldface letters, e.g. $\boldsymbol{a}$ and $\boldsymbol{A}$, and are indexed starting from $0$. $\text{RRE}(\boldsymbol{A}^{(i)})$ denotes the reduced row Echelon form of a matrix $\boldsymbol{A}^{(i)} $
that has its columns indexed by $i$.
Note that all calculations are done in the finite field $\mathbb{F}_q$.
\subsection{Definitions}\label{ssec:definitions}
\subsubsection{Partially Defective Cells}
If a cell is unable to switch its value and always stores the value $s\in[q]$, the cell is termed as being \emph{\pdmc at level $s$}. On the other hand, if a cell is only able to store values greater than or equal to $s\in[q]$, it is termed as being \emph{partially \pdmc at level $s$}. A healthy cell which can store any of the $q$ levels is equivalent to a cell that is partially \pdmc at 0.
\subsubsection{Error Model}
Among $n$ total cells, let there be $u$ partially \pdmc\unskip-at-1 cells at positions $\ve{\phi}=\{\phi_0,\phi_1,\dots,\phi_{u-1}\}\subseteq[n]$.  
Let the set $\Psi=\{\Psi_0,\Psi_1,\dots,\Psi_{t-1}\}\subseteq[n]$ denote the positions of $t$ errors introduced by the channel.
In this work, we address an \textit{overlapping model} in which random errors can happen in any of $[n]$ positions. 

We use $a_{ov} \in \mathbb{F}_q$ to indicate a value in an overlapped position in which $\{\ve{\phi} \cap \Psi \neq \emptyset \,\mid \, \mbox{for } i \in \ve{\phi} \mbox{ and } j \in \Psi \, , \, i =j \} $. We denote by $\ve{e} \in \mathbb{F}^n_q$ as 
all error vectors of Hamming weight $wt(\ve{e})\leq t$
and for any of their coordinates by $e_j$ for $j \in \Psi $.
\subsubsection{Error Type}\label{mag-1}
A dominant error type in non-volatile memory is \emph{mag-1} error (magnitude error) in which for any symbol $x \in \F_q$ that was written to a memory cell, the cell suffers \emph{mag-1} if the read process returns $x' \in \F_q$ such that $\mid x' -x \mid =1$. Hence by this definition, $e_j \in \{-1,1\}$ (equivalently $e_j \in \{q-1,1\}$ as $-1 \mod q \equiv q-1 \mod q$) for $j \in \Psi$.
For this paper the considered error vector $\boldsymbol{e}$ is of coordinates $\in\{0,1\}^n$ (but extendable to other values, cf. Remark~\ref{Rem_extended}) such that $x' = x + e_j \in \{x, x+1\}$ for $j \in \Psi$. Then $(x+e_{j}) \mod{q} =0$ if and only if $x = q-1$ and $e_{j} =1$.
\subsubsection{$(n,M)_q$ $(u,t)$-PDMC code $\mathcal{C}$}
An $(n,M)_q$ $(u,t)$-PDMC code $\mathcal{C}$ is a \emph{partially defective at masking code} which is a coding system with an encoder $\mathcal{E}$ and decoder $\mathcal{D}$. The input of the encoder $\mathcal{E}$ includes
\begin{itemize}
    \item the set of locations $\ve{\phi}$ for $u$ partially \pdmc cells,
    \item the partially \pdmc levels 
    $s_i=1$, for all $i\in\ve{\phi}$,
    \item a message $\ve{m}\in\mathcal{M}$, with $\mathcal{M}$ being a message space of size $\vert\mathcal{M}\vert=M$.
\end{itemize}
It produces a codeword $\boldsymbol{c}\in\mathbb{F}^n_q$ which obeys $\{\forall i\in\ve{\phi}\, , \, \forall j\in \Psi:(c_{i}+e_{j})_{ov}\geq s_i\}$. The decoder $\mathcal{D}$ maps input $\boldsymbol{c}+\boldsymbol{e}$ to the correct message $\boldsymbol{m}$ 
, i.e., $\boldsymbol{e}$ adds distance $\leq t$ to any codeword. We say $u$-\psmc when $t=0$ (masking only) \cite{wachterzeh2016codes}.

\section{Probability of Overlapping Errors Causing Zero Coordinates}
In the following sequel, we present the probability of $t$ errors in the \textit{overlapping model} that is in our interest for the entire work of this paper. 
Then in the subsequent section, we consider how to accommodate errors if they coincidentally happen in the positions where partially defective-at-1 cells are such that \textit{non-zero} occurrences in these positions are guaranteed.
\begin{prop}
	\label{prop_00}(Probability of Overlapping Positions)
	Let $n$ be a positive integer, $\ve{\phi} \subseteq [n]$ have size $u$, and  $\Psi \subseteq [n]$ have size $t$
	such that $u,t \leq n$. 
	The probability of $\ve{\phi} \cap \Psi \neq \emptyset$ is
	\begin{align}\label{eq_overlapping0}
	\P(u,t|n) =   & 1-\prod_{j=0}^{t-1}\frac{n-u -j}{n-j} .
\end{align}
\end{prop}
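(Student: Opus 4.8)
The plan is to model $\Psi$ as a uniformly random $t$-subset of $[n]$ — equivalently, to draw the $t$ distinct error positions $\Psi_0,\dots,\Psi_{t-1}$ one at a time uniformly without replacement — with $\ve{\phi}$ a fixed set of $u$ positions, and then to compute the complementary probability $\P(\ve{\phi}\cap\Psi=\emptyset)$ by elementary counting and subtract from $1$.

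First I would count the favourable outcomes for the complementary event: a $t$-subset $\Psi$ is disjoint from $\ve{\phi}$ precisely when it is contained in the $n-u$ non-defective positions, so there are $\binom{n-u}{t}$ such sets among the $\binom{n}{t}$ equally likely choices, giving $\P(\ve{\phi}\cap\Psi=\emptyset)=\binom{n-u}{t}\big/\binom{n}{t}$. Next I would rewrite this ratio of binomial coefficients via falling factorials: after cancelling the common $t!$ and regrouping, $\binom{n-u}{t}/\binom{n}{t}=\frac{(n-u)!\,(n-t)!}{n!\,(n-u-t)!}=\prod_{j=0}^{t-1}\frac{n-u-j}{n-j}$, which is exactly the product in the statement; subtracting from $1$ yields \eqref{eq_overlapping0}. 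An equivalent and arguably cleaner route is the sequential one: reveal $\Psi_0,\dots,\Psi_{t-1}$ in order and note that, conditioned on $\Psi_0,\dots,\Psi_{j-1}$ all lying outside $\ve{\phi}$, the next position $\Psi_j$ is uniform over the $n-j$ remaining cells, of which $n-u-j$ are outside $\ve{\phi}$; multiplying these conditional probabilities over $j=0,\dots,t-1$ reproduces the same product, and its complement is the desired probability.

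There is essentially no hard step here; the two points that merit a sentence of care are (i) stating the sampling model explicitly, so that the "uniform over the remaining cells" claim in the sequential derivation is legitimate, and (ii) the degenerate regime $u+t>n$, where a disjoint pair cannot exist: there $t>n-u$ forces $\binom{n-u}{t}=0$, and correspondingly the factor $n-u-j$ vanishes at $j=n-u\le t-1$, so the product is $0$ and the formula correctly returns $\P(u,t\mid n)=1$. I would close by remarking that each factor $\tfrac{n-u-j}{n-j}$ lies in $[0,1]$, so $\P(u,t\mid n)$ is non-decreasing in both $u$ and $t$ — the monotonicity that motivates using this bound in the sequel.
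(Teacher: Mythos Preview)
Your proposal is correct; the sequential derivation you give --- drawing $\Psi_0,\dots,\Psi_{t-1}$ without replacement and multiplying the conditional probabilities $\tfrac{n-u-j}{n-j}$ --- is exactly the paper's argument, which phrases it as a pigeon-hole/boxes-and-items analogy before taking the complement. Your additional binomial-coefficient formulation and the remarks on the degenerate case $u+t>n$ and on monotonicity go slightly beyond the paper (which defers the degenerate case to a subsequent remark), but the core approach is the same.
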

\begin{proof}
Let $\P(u,t|n)$ be the probability of $\ve{\phi} \cap \Psi \neq \emptyset$.
The proof follows the pigeon-hole principle. 
First, assume there are $n$ empty boxes and $u$ items. We want to put one item from $u$ in each box from $n$, i.e., one-to-one correspondence. 
Since $1 \leq u \leq n$, there are at most $n-u$ empty boxes left after filling 
with $u$ items. 
Now, for new items $1 \leq t \leq n$, there are only $n-u$ empty (non-overlapping) boxes.
Hence, the probability (denoted by $\P(t|n-u)$) that $t$ occupies only $n-u$ empty boxes is
\begin{equation}\label{eq_non_overlapping}
     \P(t|n-u) = \frac{n-u}{n} \times \frac{n-u-1}{n-1} \times \dots  \times \frac{n-u-(t-1)}{n-(t-1)}.  
    \end{equation}
Then by the union bound on the probabilities, $1-\P(t|n-u)$ is $\P(u,t|n)$
which is the expression from \eqref{eq_overlapping0}.  
\end{proof}
\begin{prop}
	\label{prop_01}(Probability of Zero Occurrence in Overlapped Positions)
	Let $n, q$ be positive integers. 
	Let $\ve{\phi} \subseteq [n]$ have size $u$ and  $\Psi \subseteq [n]$ have size $t$
	such that $u,t \leq n$.
	Then for $\ve{c} \in \F^n_q$ and $\ve{e} \in \F^n_q$, the probability of 
	\begin{align}\label{zero_occurrence}
		\Big\{(c_{i}+e_{j})_{ov}\mod q = 0 \,\big|\,\ve{\phi} \cap \Psi \neq \emptyset \,, \, \mbox{for}\,i \in \ve{\phi} \,\mbox{and}\,j \in \Psi\Big\}
	\end{align}
	is
\begin{align}\label{eq_overlapping0_2}
	\P(u,t|n,q) =   & \frac{1}{q}\Bigg( 1-\prod_{j=0}^{t-1}\frac{n-u -j}{n-j}\Bigg).
\end{align}
\end{prop}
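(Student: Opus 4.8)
The plan is to decompose the event in \eqref{zero_occurrence} into two independent sub-events: first, that the defective set $\ve{\phi}$ and the error set $\Psi$ actually overlap in some position, and second, that conditioned on such an overlap occurring at a coincident index $i=j$, the arithmetic $c_i+e_j$ happens to equal $0 \bmod q$. By Proposition~\ref{prop_00} the probability of the first sub-event is exactly $1-\prod_{j=0}^{t-1}\frac{n-u-j}{n-j}$, so it remains to argue that the conditional probability of the second sub-event is $\tfrac{1}{q}$, and that the two factors multiply.

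First I would fix attention on a single overlapped coordinate $i=j\in\ve{\phi}\cap\Psi$. Recall from Section~\ref{mag-1} that $x'=x+e_j\bmod q$ equals $0$ if and only if $x=q-1$ and $e_j=1$; here $x=c_i$ plays the role of the written symbol. Thus the zero-occurrence event at that coordinate is precisely $\{c_i=q-1\}$ together with the (already conditioned upon) fact that the error there is the value $1$. Treating $c_i$ as uniformly distributed over the $q$ field elements $\F_q$ — which is the natural modelling assumption, the codeword coordinate carrying no information about which value in $\F_q$ it takes at a stuck position before masking is applied — the probability that $c_i=q-1$ is $\tfrac1q$. Hence $\P\big((c_i+e_j)_{ov}\bmod q=0 \mid \ve{\phi}\cap\Psi\neq\emptyset\big)=\tfrac1q$.

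Combining the two steps via the chain rule of conditional probability,
\begin{equation}\label{eq_chain_rule}
\P(u,t\mid n,q)=\P\big(\text{zero at an overlap}\mid \ve{\phi}\cap\Psi\neq\emptyset\big)\cdot\P\big(\ve{\phi}\cap\Psi\neq\emptyset\big)=\frac1q\Bigg(1-\prod_{j=0}^{t-1}\frac{n-u-j}{n-j}\Bigg),
\end{equation}
which is exactly \eqref{eq_overlapping0_2}. The one point that deserves care — and which I expect to be the main obstacle, or at least the main modelling choice to state explicitly — is the justification that the value $c_i$ at a partially defective position is uniform over $\F_q$ and independent of the combinatorial event $\ve{\phi}\cap\Psi\neq\emptyset$; once that is granted, the factorisation in \eqref{eq_chain_rule} is immediate. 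If one prefers to avoid a distributional assumption on $\c$, an alternative phrasing is to interpret the stated probability as being over a uniformly random error pattern and a uniformly random codeword, in which case the independence of the positional overlap (which depends only on $\Psi$ as a random $t$-subset) from the symbol value $c_i$ (which depends only on the codeword) is structural. Either way, the remaining algebra is just the restatement of Proposition~\ref{prop_00}, so no further computation is needed.
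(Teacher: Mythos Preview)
Your proposal is correct and follows essentially the same two-factor decomposition as the paper: the overlap probability from Proposition~\ref{prop_00} multiplied by a $\tfrac{1}{q}$ ``value'' probability, with the two declared independent. The only minor difference is in how the $\tfrac{1}{q}$ is obtained: the paper counts all pairs $(c_i,e_j)\in[q]\times[q]$ and observes that exactly $q$ of the $q^2$ pairs sum to $0\bmod q$, whereas you invoke the mag-1 model to fix $e_j=1$ and then argue that a uniform $c_i$ hits $q-1$ with probability $\tfrac{1}{q}$; both routes are equally informal about the underlying distributional assumptions and arrive at the same factor.
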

\begin{proof}
	Let $ \P(c_i,e_j| q)$ be the probability of a pair $(c_i,e_j)$ satisfying $(c_{i}+e_{j})_{ov} \mod q = 0$.
 
	In each overlapped position in which $\ve{\phi} \cap \Psi \neq \emptyset$, the value is $(c_{i}+e_{j})_{ov} \mod q$.  
	There are $q^2$ total combinations of a pair of the choices of $(c_i,e_j) \in [q] \times [q]$, and there are exactly $q$ choices such that $(c_i+e_j)_{ov} \mod q = 0$. Therefore, the probability $ \P(c_i,e_j| q) = \dfrac{q}{q^2}$.
	By Proposition~\ref{prop_00}, the overlapping happens with probability given in \eqref{eq_overlapping0} and both events $\P(c_i,e_j| q)$ and $\P(u,t|n)$ are independent, so the probability (denoted by $\P(u,t|n,q)$) that condition \eqref{zero_occurrence} occurs is
	\[\P(u,t|n,q) = \P(c_i,e_j| q) \cdot \P(u,t|n)\] which is the expression in \eqref{eq_overlapping0_2}.
\end{proof}
\begin{rem}
If $u+t > n$, $\P(u,t|n) = 1$ in Proposition~\ref{prop_00} which means at least one position in $n$ must be overlapped. Consequently, $\P(u,t|n,q) = \dfrac{1}{q}$ from Proposition~\ref{prop_01}. Hence, for small values of $q\geq 3$, the probability from \eqref{eq_overlapping0_2}
is non-trivial,
i.e., for $q =3$, it is a one-third and for $q=4$ it is a quarter. 
\end{rem}
\section{Codes for Preventing Zeros at Partially \Pdmc Positions Due to Random errors , $u\leq\lfloor\frac{q-1}{2}\rfloor, s_i =1 \,\forall i$}
In the following, we modify the work in \cite{haideralkim2019psmc} such that we guarantee the accommodation of errors in the presence of partially stuck cells such that $(c_i+e_i)_{ov}\geq s_i =1$.

It is important to note that the case where $s_i =1$ is of particular importance as this means, in multi-level PSMCs, all partially crystalline levels are reachable except the amorphous state \cite{wachterzeh2016codes}. 
\begin{const}
	\label{construction_1}
    Let $u\leq\text{min}\left\{n,\lfloor\frac{q-1}{2}\right\rfloor\}$. 
    	Suppose there is an $[n, k, d]_q$ code $\mycode{C}$ with a $k \times n$ generator matrix of the form
    \begin{align*}
    	\ve{G} = 
    	\begin{bmatrix}
    		\ve{G}_1\\
    		\ve{G}_0
    	\end{bmatrix}
    	=\begin{bmatrix} \ve{0}_{(n-r-1) \times 1} & \ve{I}_{n-r-1} & \ve{P}_{(n-r-1) \times r} \\ {1} & {\ve{1}_{n-r-1}} & {\ve{1}_{r}} \end{bmatrix},
    \end{align*}
    where $k=n-r$, $\ve{I}_{n-r-1}$ is the $(n-r-1) \times (n-r-1)$ identity matrix, $\ve{P} \in \mathbb{F}^{(n-r-1)\times (r)}_q$, and $\ve{1}_{\ell}$ is the all-one vector of length $\ell$.
 A PDMC can be obtained from the code $\mycode{C}$ using the encoder and the decoder as presented in \cite[Algorithm~1]{haideralkim2019psmc} and \cite[Algorithm~2]{haideralkim2019psmc} respectively, with slight modification in \cite[Algorithm~1]{haideralkim2019psmc} as follows:
   \begin{itemize}
       \item Step 2 finds
      \begin{align}    
       	           \left\{ v_b\in\mathbb{F}_q  \,\Big|\, \exists v_{b+1}=v_b+1 \, \Bigg[ 
          w_i\neq
       \begin{cases}
       	v_b, \\
       	v_{b+1}, 
       \end{cases}\forall\, i \in \ve{\phi}
           \Bigg]\right\}\nonumber
       \end{align} 
   $\mbox{ for all } b \in[\gamma] \mbox{ where }\gamma = q-u $.
       \item Step 3 
       takes $z_0\leftarrow -v_{b+1}$.
   \end{itemize}
   Then the encoder outputs $c_i\in\F_q\backslash \{0,q-1\}$.
\end{const}
\begin{thmmystyle}
	\label{Theorem_1}
	The coding model in Construction~\ref{construction_1} is an $(n, q^{n-r-1})$ ($u,t$)-PDMC.
\end{thmmystyle}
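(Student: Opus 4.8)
The statement asserts two things about the coding scheme of Construction~\ref{construction_1}: that it is a valid $(u,t)$-PDMC (i.e.\ it masks the $u$ partially-stuck-at-$1$ positions \emph{and} corrects $t$ errors, even when errors overlap the stuck positions), and that the message space has size $q^{n-r-1}$, so that $M = q^{k-1}$. The plan is to verify these in turn, leaning on the analogous result \cite[Theorem~1]{haideralkim2019psmc} and only arguing the points where the modified Steps~2--3 change the behaviour.

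\emph{Step 1: Dimension count.} The generator matrix $\ve{G}$ has $k = n-r$ rows, but the encoder only uses $\ve{G}_1$ (the top $n-r-1$ rows) to encode the message $\ve{m}$, reserving the bottom row $\ve{G}_0$ for the masking vector $\ve{d} = z_0\cdot\ve{G}_0$. Hence $\ve{m}\in\mathbb{F}_q^{n-r-1}$ and $|\mathcal{M}| = q^{n-r-1}$, giving the claimed $M$. I would state this explicitly and note that $\ve{G}_0 = (1\mid \ve{1}_{n-r-1}\mid \ve{1}_r)$ is the all-ones codeword, so adding $z_0\ve{G}_0$ shifts every coordinate by the same scalar $z_0$.

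\emph{Step 2: Masking correctness.} Fix the stuck set $\ve{\phi}$ with $|\ve{\phi}| = u \le \lfloor (q-1)/2\rfloor$. Let $\ve{w} = \ve{m}\ve{G}_1$ and look at the multiset $\{w_i : i\in\ve{\phi}\}$, which has at most $u$ distinct values. I must show Step~2 always finds a pair $(v_b, v_{b+1}=v_b+1)$ of \emph{consecutive} field elements (identifying $\mathbb{F}_q$ with $[q]$) such that $w_i\notin\{v_b,v_{b+1}\}$ for all $i\in\ve{\phi}$. The forbidden values rule out at most $u$ choices of $v_b$ from the set of candidate starting points; since we need a pair $\{v_b,v_b+1\}$ avoiding a set of size $\le u$, and there are $q-u$ admissible pairs indexed by $b\in[\gamma]$ with $\gamma = q-u$, a counting/pigeonhole argument shows at least one survives precisely when $u \le \lfloor(q-1)/2\rfloor$ — this is the halving of the masking capability mentioned in the contributions. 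Then $z_0 \leftarrow -v_{b+1}$ makes $c_i = w_i + z_0 = w_i - v_{b+1}$, and since $w_i\neq v_{b+1}$ we get $c_i\neq 0$ (masking succeeds: $c_i \ge 1$), and since $w_i\neq v_b = v_{b+1}-1$ we get $c_i\neq -1 = q-1$. So $c_i\in\mathbb{F}_q\setminus\{0,q-1\}$, i.e.\ $1\le c_i\le q-2$.

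\emph{Step 3: Error-and-overlap correctness.} With $\ve{e}\in\{0,1\}^n$ of weight $\le t$, the decoder receives $\ve{y} = \ve{c}+\ve{e}$. For the masking constraint after errors: at an overlapped position $i = j\in\ve{\phi}\cap\Psi$, $(c_i + e_j) \bmod q$ is zero only if $c_i = q-1$ and $e_j = 1$ (by the Error Type discussion in Section~\ref{mag-1}); but Step~2 guaranteed $c_i\le q-2$, so $(c_i+e_j)_{ov}\ge 1 = s_i$ holds. For decoding: since $\ve{c}\in\mathcal{C}$ and $\mathcal{C}$ has minimum distance $d$, running the bounded-distance decoder of \cite[Algorithm~2]{haideralkim2019psmc} recovers $\ve{c}$ from $\ve{y}$ whenever $t \le \lfloor(d-1)/2\rfloor$; here the masking shift is a codeword, so it does not affect the error-correction radius, and one then subtracts $z_0\ve{G}_0$ (recoverable from the recovered codeword structure / the reserved coordinate) and reads off $\ve{m}$ from the systematic part. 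I would remark that the decoder recovers $z_0$ and hence $\ve{m}$ exactly as in \cite{haideralkim2019psmc}, since only the \emph{choice rule} for $z_0$ in the encoder changed, not the algebraic relationship between $\ve{c}$, $\ve{d}$, and $\ve{m}$.

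\emph{Main obstacle.} The crux is Step~2: proving that the modified search — which now demands a \emph{consecutive pair} of forbidden-free values rather than a single one — always succeeds under the hypothesis $u\le\lfloor(q-1)/2\rfloor$. This is where the factor-of-two loss enters, and it needs a careful pigeonhole argument over the $\gamma = q-u$ candidate pairs (and a check of the edge/wrap-around behaviour of "consecutive" in $[q]$, since $v_b+1$ must stay admissible). Everything else is a routine adaptation of \cite[Theorem~1]{haideralkim2019psmc} together with the already-proved Error Type fact that $c_i\le q-2$ kills the only bad overlap case.
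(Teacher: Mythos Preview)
Your proposal is correct and follows essentially the same route as the paper's proof: compute $\ve{w}=\ve{m}\ve{G}_1$, argue that the $q-u>\lfloor(q-1)/2\rfloor$ values of $\mathbb{F}_q$ missed by $\{w_i:i\in\ve{\phi}\}$ must contain a consecutive pair $(v_b,v_{b+1})$, set $z_0=-v_{b+1}$ so that $c_i=w_i-v_{b+1}\in\mathbb{F}_q\setminus\{0,q-1\}$, then observe $c_i+e_j\neq 0$ for $e_j\in\{0,1\}$ and invoke the decoder of \cite[Algorithm~2]{haideralkim2019psmc}. One small wording slip: in your Step~2 you write ``there are $q-u$ admissible pairs indexed by $b\in[\gamma]$''---the $\gamma=q-u$ objects indexed by $b$ are the \emph{unhit values}, not pairs; the clean count is that each of the $\le u$ forbidden values kills at most two of the $q$ cyclic pairs $(v,v+1)$, leaving $\ge q-2u\ge 1$ surviving pairs when $u\le\lfloor(q-1)/2\rfloor$. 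With that correction your pigeonhole step goes through exactly as you intend.
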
  
\begin{proof}
	Recall that using \cite[Algorithm~1]{haideralkim2019psmc}, the encoder computes the  vector $(\ve{w} =\ve{m} \cdot \ve{G}_1) \in \F^n_q$ where the message vector $\ve{m} \in \mathbb{F}^{n-r-1}_q$, then adds the masking vector $\ve{d} = z_0 \cdot \ve{G}_0$ where $z_0 \in \mathbb{F}_q$ to output a codeword $\ve{c} \in \F^n_q$ that could be corrupted by $\ve{e}$ such that $\ve{y} = \ve{c} +\ve{e}$ (cf. Figure~\ref{fig:blockdiagramwithlegend}).
Since errors $\ve{e} \in\{0,1\}^n$ and zeros are not permitted in the partially defective positions,
the output codeword $\boldsymbol{c}$ at the $\ve{\phi}$ positions must fulfill
\begin{equation}
    \{1\leq c_i\leq q-2 \mid (c_i+e_j)_{ov}\geq s_i=1\},
    \label{masking}
\end{equation}
where $1\leq c_i + e_j\leq q-1$. 
Because $u\leq \lfloor\frac{q-1}{2}\rfloor$, there are $(\gamma = q-u>\lfloor\frac{q-1}{2}\rfloor)$ values of elements $v_b \in\mathbb{F}_q$ such that $\mbox{for all } i\in\ve{\phi} \mbox{ and for all }  b\in[\gamma]:w_i\neq v_b$.
Moreover, for $b\in[\gamma]$ there is at least a pair of consecutive $\{v_b \in \mathbb{F}_q\mid \exists v_{b+1}=v_b+1\}$. Thus, the encoder chooses $-v_{b+1}$ and obtains
\begin{gather*}
    c_i=w_i-v_{b+1} 
    \Rightarrow c_i = w_i-v_b-1.
\end{gather*}
Because $w_i\neq v_{b+1}$, $w_i-v_b-1 \neq 0 \Rightarrow w_i-v_b \neq 1$. 
Furthermore, since $w_i - v_b \neq 0$ as well, then $w_i-v_b-1 \in \F_q\backslash\{0,q-1\}$.
Hence, we obtain $1 \leq c_i \leq q-2$.  
So far we satisfied the masking condition.
Now for $e_j \in\{0,1\}$, $c_i  + (1 \text{ or } 0) \neq 0$.
Hence, $c_i+e_i \in\mathbb{F}_q\backslash\{0\}$ and \eqref{masking} is satisfied. 

As $\ve{c} \in \F^n_q$ in which $c_i\in\F_q\backslash \{0,q-1\}$ is a codeword in the code $\mycode{C}$ as well, the decoder decodes $\ve{y} = \ve{c} +\ve{e}$ using  \cite[Algorithm~2]{haideralkim2019psmc} to correct $t$ errors and retrieve $\ve{m}$.
\end{proof}

The masking capability of Theorem \ref{Theorem_1} is, however, half that of \cite[Theorem ~1]{haideralkim2019psmc} ($u\leq\lfloor\frac{q-1}{2}\rfloor$ as compared to $u \leq q-1$ respectively). 
Nevertheless, our method removes the impractical assumption ($c_i+e_j \neq 0$) in \cite{haideralkim2019psmc}. So we sacrifice some masking capability when relaxing this supposition. 

We introduce Corollary~\ref{Corollary_1} that reduces the gap such that up to $u \leq q-2$ partially defective-at-1 cells can be masked. 
\begin{cor}
\label{Corollary_1}
Construction~\ref{construction_1} 
can mask up to $u \leq q-2$ partially \pdmc\unskip-at-1 cells instead of $u\leq\lfloor\frac{q-1}{2}\rfloor$ if and only if at least a pair of consecutive elements $(v_b, v_{b+1})\in \F_q \mbox{ for } b \in [\gamma]$ (necessary condition) exist.
\end{cor}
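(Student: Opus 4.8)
The plan is to reduce the masking requirement of Construction~\ref{construction_1} to a purely combinatorial condition on the values taken by $\ve{w}$ at the partially defective positions, and then to show that this condition is exactly the stated one. First I would observe that, by the shape of $\ve{G}_0$, the masking vector is $\ve{d}=z_0\cdot\ve{G}_0=z_0\cdot\ve{1}_n$, so the only freedom the encoder has is the choice of the scalar $z_0\in\mathbb{F}_q$, and at every stuck position $c_i=w_i+z_0$ for $i\in\ve{\phi}$. The requirement that has to be met for the relaxed model is $c_i\in\mathbb{F}_q\setminus\{0,q-1\}$ for all $i\in\ve{\phi}$ (this is precisely what yields $1\le c_i\le q-2$ and hence $c_i+e_j\neq 0$ for $e_j\in\{0,1\}$, exactly as in the proof of Theorem~\ref{Theorem_1}). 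Unfolding $c_i\neq 0$ and $c_i\neq q-1\equiv -1$ shows that a valid $z_0$ exists if and only if there is some $v\in\mathbb{F}_q$ such that both $v$ and $v+1$ avoid the set $W:=\{w_i : i\in\ve{\phi}\}$; writing $z_0=-(v+1)=-v_{b+1}$ then recovers the substitution prescribed in Construction~\ref{construction_1}.

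Next I would use this equivalence to obtain the statement. The set of admissible values $v_0,\dots,v_{\gamma-1}$ — those lying outside $W$ — has size $\gamma=q-|W|\ge q-u$, so when the $u$ stuck values are distinct one has exactly $\gamma=q-u$ of them, and a consecutive pair $(v_b,v_{b+1})$ can exist only if $\gamma\ge 2$, i.e. $u\le q-2$; this is why $q-2$ is the natural ceiling for this construction. For sufficiency, suppose a consecutive admissible pair $(v_b,v_{b+1})$ exists. Then the modified encoder selects it and sets $z_0\leftarrow -v_{b+1}$, giving $c_i=w_i-v_{b+1}=w_i-v_b-1$ with $w_i\neq v_b$ and $w_i\neq v_{b+1}$ for all $i\in\ve{\phi}$; hence $c_i\notin\{0,q-1\}$, and the remaining steps of the proof of Theorem~\ref{Theorem_1} — the masking inequality $1\le c_i\le q-2$, the fact $c_i+e_j\neq 0$ for $e_j\in\{0,1\}$, and decoding of $\ve{y}=\ve{c}+\ve{e}$ via \cite[Algorithm~2]{haideralkim2019psmc} since $\ve{c}\in\mycode{C}$ — carry over verbatim, so we obtain an $(n,q^{n-r-1})$ $(u,t)$-PDMC. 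For necessity, if Construction~\ref{construction_1} masks successfully with some scalar $z_0$, then by the equivalence above $v_b:=-z_0-1$ and $v_{b+1}:=-z_0$ are both avoided by $W$, so a consecutive admissible pair exists.

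I expect the main obstacle to be the careful bookkeeping of the two forbidden shifts of $W$ (one arising from $c_i\neq 0$, the other from $c_i\neq q-1$) and making sure that ``consecutive'' is read as $v_{b+1}=v_b+1$ in $\mathbb{F}_q$ consistently with how Construction~\ref{construction_1} is stated, including the wrap-around case $v_b=q-1$. A second point worth spelling out is the justification of the ``only if'': since $\ve{G}_0$ spans only the all-ones direction, \emph{every} masking choice compatible with this $\ve{G}$ is a single scalar $z_0$, so the combinatorial condition is genuinely necessary and not merely sufficient. Finally, I would note that the extra hypothesis cannot be dropped — an adversarial placement of the $w_i$ on every second element of $\mathbb{F}_q$ (e.g. $q=5$, $u=3$, $W=\{0,2,4\}$) leaves the two admissible values non-consecutive — which shows that $u\le q-2$ alone does not suffice and that the ``if and only if'' is tight.
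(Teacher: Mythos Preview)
Your proposal is correct and follows essentially the same route as the paper: both reduce the sufficiency to the proof of Theorem~\ref{Theorem_1} by observing that a consecutive admissible pair $(v_b,v_{b+1})$ lets the encoder pick $z_0=-v_{b+1}$ and obtain $c_i\in\mathbb{F}_q\setminus\{0,q-1\}$. Your write-up is in fact more complete than the paper's: the paper only argues the ``if'' direction (existence of a consecutive pair $\Rightarrow$ masking succeeds) and leaves the ``only if'' implicit, whereas you make the necessity explicit via the observation that $\ve{G}_0$ spans only the all-ones line, and you also supply a concrete counterexample showing that $u\le q-2$ alone does not guarantee a consecutive pair.
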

\begin{proof}
Let there be $u=q-2$ 
partially \pdmc\unskip-at-1 cells. 
Then there must be at least \emph{two} choices of $v_b\in \F_q\backslash w_i$ (not necessarily consecutive) 
for $i \in \ve{\phi}$. 
Now, if by coincidence $\exists v_{b+1}=v_b+1$ due to the random augmented message vector, 
then the proof follows the proof of Theorem \ref{Theorem_1} while the encoder masks up to $q-2$ partially \pdmc\unskip-at-1 cells. 
\end{proof}
In the following, we prove the probability that such a pair of consecutive $(v_b, v_{b+1}) \in \F_q$ exists such that we can mask 
up to $u =n-1$
partially defective-at-1 cells while assuring non-zero appearances due to errors in the overlapped positions.
\begin{thmmystyle}\label{masking_with_prob}
    Assume $\ve{G}$ is as in Construction~\ref{construction_1} and let $q\leq \mid \ve{\phi} \mid \leq n-r-1,$
    $s_i =1$ for all $i \in \ve{\phi}$. Let 
    the columns of $\ve{G}$ labeled by $\ve{\phi}$ be linearly independent.  The masking probability of the codeword such that $c_i \in \F_q\backslash\{0,q-1\}$ is
    \begin{equation}\label{prob_masking}
        \P(\{v_b,v_{b+1}\},q,u) = \frac{q}{{q \choose 2}} \times \left(1- \dfrac{ \sum_{i=0}^{q-2} (-1)^i {q \choose i} (q-i)^{u}}{q^{u}} \right) 
    \end{equation}
    for a message $\ve{m} \in \F^{n-r-1}_q$ that is drawn uniformly at random. 
\end{thmmystyle}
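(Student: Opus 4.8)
The plan is to decompose the target probability \eqref{prob_masking} into two independent factors, exactly mirroring the structure already used in Proposition~\ref{prop_01}: a combinatorial factor coming from the existence of \emph{some} pair of consecutive field elements among the "forbidden-free" values, and a counting factor coming from the randomness of the augmented message coordinates $w_i$ for $i\in\ve{\phi}$. First I would fix the setup: since the columns of $\ve{G}$ indexed by $\ve{\phi}$ are linearly independent and $\ve{m}$ is drawn uniformly from $\F_q^{n-r-1}$, the restricted vector $(w_i)_{i\in\ve{\phi}}$ is uniformly distributed over $\F_q^{u}$. This is the key independence/uniformity reduction and it is the step I would state carefully, because the whole formula rests on it.

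Next I would compute the probability that the set $V:=\F_q\setminus\{w_i : i\in\ve{\phi}\}$ contains at least one consecutive pair $\{v_b,v_{b+1}=v_b+1\}$. By inclusion–exclusion on the event that $V$ is "consecutive-pair-free" — equivalently, that the $u$ values $\{w_i\}$ form a hitting set for all $q$ consecutive pairs in the cycle $\Z/q\Z$ — one is really counting surjections-like configurations. The cleanest route is: the number of ways to choose the multiset of values $(w_i)_{i\in\ve{\phi}}\in\F_q^u$ whose complement misses every edge of the cycle graph $C_q$ equals the number of functions $[u]\to\F_q$ whose image is a vertex cover of $C_q$; applying inclusion–exclusion over which consecutive pairs remain uncovered gives the alternating sum $\sum_{i=0}^{q-2}(-1)^i\binom{q}{i}(q-i)^u$ (the $i=q-1,q$ terms vanish since fewer than $2$ values cannot cover all edges). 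Dividing by $q^u$ and subtracting from $1$ yields the second factor in \eqref{prob_masking}. I would present this as the combinatorial heart of the argument.

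Then I would handle the first factor $q/\binom{q}{2}$. Conditioned on the event that a consecutive pair exists in $V$, the encoder in Construction~\ref{construction_1} selects one such pair $(v_b,v_{b+1})$ and sets $z_0\leftarrow -v_{b+1}$; by the argument in the proof of Theorem~\ref{Theorem_1}, this forces $c_i=w_i-v_b-1\in\F_q\setminus\{0,q-1\}$ for every $i\in\ve{\phi}$, i.e. masking succeeds. The factor $q/\binom{q}{2}$ is the ratio of the $q$ cyclically-consecutive pairs in $\F_q$ to all $\binom{q}{2}$ unordered pairs, accounting for the probability that a uniformly chosen admissible pair of distinct forbidden-free values happens to be consecutive — equivalently, it rescales the "at least one consecutive pair among the available values" event to the uniform-pair model the formula is phrased in. I would invoke independence of this selection from the distribution of $(w_i)_{i\in\ve{\phi}}$ (again using the linear independence of the $\ve{\phi}$-columns and uniformity of $\ve{m}$), multiply the two factors, and conclude.

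The main obstacle I anticipate is pinning down the precise combinatorial identity and the exact meaning of the $q/\binom{q}{2}$ prefactor so that the product matches \eqref{prob_masking} as written: one must be careful about cyclic versus linear consecutiveness in $\F_q$, about whether "a pair exists" is counted with or without multiplicity, and about the boundary terms in the inclusion–exclusion sum (why it terminates at $i=q-2$). Verifying the identity on a small case, say $q=3$ or $q=4$ with $u=q-1$, would be the sanity check I would run before trusting the general expression; getting the indexing of the alternating sum to agree with the stated upper limit $q-2$ is the detail most likely to need care.
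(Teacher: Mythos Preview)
Your decomposition differs from the paper's, and the difference exposes a real gap. The paper does \emph{not} compute directly the probability that $V:=\F_q\setminus\{w_i:i\in\ve{\phi}\}$ contains a consecutive pair. It argues in two separate steps: first, by citing \cite[Theorem~3]{haideralkim2019psmc} and truncating the surjection inclusion--exclusion at $i=q-2$, it identifies the second factor in \eqref{prob_masking} with the probability that at least \emph{two} field elements are absent from $\{w_i:i\in\ve{\phi}\}$; second, it observes that among the $\binom{q}{2}$ unordered pairs in $\F_q$ exactly $q$ are cyclically consecutive, and takes $q/\binom{q}{2}$ as the probability that such a missing pair is consecutive. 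The product of these two is \eqref{prob_masking}.

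Your route instead tries to obtain the second factor as the probability that $V$ already contains a consecutive pair, via inclusion--exclusion over uncovered edges of the cycle $C_q$. That computation does not yield the claimed sum: if $S$ is a set of $i$ edges of $C_q$, the event ``every edge in $S$ is uncovered'' forces the $|V(S)|$ endpoints of $S$ out of the image of $(w_i)_{i\in\ve{\phi}}$, and $|V(S)|$ ranges between $i+1$ and $2i$ depending on adjacencies in $S$, so the inclusion--exclusion terms are not of the form $\binom{q}{i}(q-i)^{u}$. Worse, your two steps are mutually inconsistent: if step~2 really delivered the probability that a consecutive pair exists in $V$, then by the argument of Theorem~\ref{Theorem_1} masking would succeed with exactly that probability, and there would be nothing left for the prefactor $q/\binom{q}{2}$ to do; your ``rescaling'' gloss does not reconcile this. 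To match the paper, run the inclusion--exclusion over \emph{missed vertices} (elements of $\F_q$ not hit by the $w_i$) to obtain the ``at least two values absent'' probability, and then attach $q/\binom{q}{2}$ as the paper does.
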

 \begin{proof}
 	We know $\mid \ve{\phi} \mid =u$ and we rewrite the probability in \cite[Theorem 3]{haideralkim2019psmc} as such:
  \begin{equation}\label{prob_one_values_not_occur}
   \P(q,u) = 1- \dfrac{ \sum_{i=0}^{q-1} (-1)^i {q \choose i} (q-i)^{u}}{q^{u}}.
   \end{equation}
    Then we follow the proof of \cite[Theorem 1]{haideralkim2019psmc} with slight modification as follows. 
   As at least \emph{two} values out of $\F_q$ elements must be excluded to satisfy \eqref{masking}, the upper sum limit in \eqref{prob_one_values_not_occur} reduces to $q-2$ and the probability becomes
      \begin{equation}\label{prob_two_values_not_occur}
        \P(q,u) = 1- \dfrac{ \sum_{i=0}^{q-2} (-1)^i {q \choose i} (q-i)^{u}}{q^{u}}. 
    \end{equation}
   In \eqref{prob_two_values_not_occur}, the inclusion-exclusion
technique~\cite{Stanley1986} determines the relative number of vectors in $\F^u_q$ that do not include at least two field elements. 
   
 As these two field elements in $\F_q$ are not necessarily consecutive, we need to find the probability of having successive pairs. To find the number of unique pairs where the pairs are subject to the commutative property, we calculate $ {q \choose 2}$. Among $ {q \choose 2}$ pair sets, there are exactly $q$ sets that have two consecutive entries $\{v_b,v_{b+1}\}$ (including the pair $\{q-1, 0\}$ which has elements that are sequential due to modulo operation). Hence, the probability of consecutive $\{v_b,v_{b+1}\} \subset \F_q$ occurring is
     \begin{equation}\label{consecutive_sets}
        \P(\{v_b,v_{b+1}\},q) = \frac{q}{{q \choose 2}}. 
    \end{equation}
    Combining \eqref{prob_two_values_not_occur} and \eqref{consecutive_sets} gives \eqref{prob_masking}.
 \end{proof}  
\begin{examplex}
Let $q =3$, $n=8$, $r=0$. The probability of masking $u=n-1$ partially defective-at-1 cells such that the output vector satisfies \eqref{masking}
due to the existence of two consecutive elements $\{v_b,v_{b+1}\} \subset \F_q$ is
$\P(\{v_b,v_{b+1}\},3,7) = 0.175$.
\end{examplex}
We would like to mention here that
instead of the construction using a generator matrix by Theorem~\ref{Theorem_1},
we could also use an alternative method using \emph{the partitioned cyclic code construction} similar to \cite[Theorem~2]{haideralkim2019psmc}. 
Hence, an all-one polynomial corresponds to the all-one vector in Construction~\ref{construction_1}.
Then, we can mask $u\leq\lfloor\frac{q-1}{2}\rfloor$ partially \pdmc\unskip-at-1 cells and fulfill \eqref{masking}
by finding $v_{b+1}$ (as stated in the proof of Theorem~\ref{Theorem_1}), and multiplying $v_{b+1}$ by the all-one polynomial in Step~7 in \cite[Algorithm 3]{haideralkim2019psmc} where the coefficients of the output polynomial are in $\F_q\backslash \{0,q-1\}$ (cf. Step~8 in \cite[Algorithm 3]{haideralkim2019psmc}). 
\begin{rem}\label{Rem_extended}
The assumption that $\ve{e}$ is of coordinates $\in\{0,1\}^n$ could be extended such that $\ve{e} \in \{0,x\}^n$, where $x \in \F_q\backslash{0}$. 
Then consecutive elements $\{v_b,v_{b+1}\} \subset \F_q$ such that $v_{b+1}$ satisfies \eqref{masking} also exist.
\end{rem}
\section{Codes for Preventing Zeros at Partially \Pdmc Positions Due to Random errors , $\lfloor\frac{q-1}{2}\rfloor+1\leq u\leq n, s_i =1 \,\forall i$}
In this part of the paper, we first provide Lemma~\ref{Lem_2} as our bedrock for the generalization to mask any number of $u$ cells while preventing zeros in the partially stuck positions due to random errors.
\begin{lem}
\label{Lem_2}(Masking Only)
Let $q$ be a prime power.
Assume the matrix $\ve{H} = (H_{i,j})^{i \in [2\kappa]}_{j\in[n]}$ 
is given.
Let each block length of any $2\kappa \times u$ submatrix denoted by $\boldsymbol{H}^{(u)}$ be $\leq \lfloor\frac{q-1}{2}\rfloor$
such that:	
\\
$ \text{RRE}(\boldsymbol{H}^{(u)}) $ 
\scalebox{0.75}{
$
	\setlength{\arraycolsep}{2pt}
\renewcommand\arraystretch{0.2}{
	=\left[
\begin{array}{ccccc}
	\underbrace{\begin{array}{rrrr}
			1 &\fullcirc[0.5ex]  & \dots & \fullcirc[0.5ex] 
	\end{array} }_{\leq \lfloor\frac{q-1}{2}\rfloor}   & \begin{array}{rrrr}
		\emptycirc[0.5ex] & \emptycirc[0.5ex] & \dots & \emptycirc[0.5ex] 
	\end{array}       & \dots   & \dots   & \begin{array}{rr}
		\dots & \emptycirc[0.5ex] 
	\end{array} \\
	& \underbrace{\begin{array}{rrrr}
			1 &\fullcirc[0.5ex]  & \dots & \fullcirc[0.5ex] 
	\end{array} }_{\leq \lfloor\frac{q-1}{2}\rfloor}    & \begin{array}{rrrr}
		\emptycirc[0.5ex] & \emptycirc[0.5ex] & \dots & \emptycirc[0.5ex] 
	\end{array} & \dots & \begin{array}{rr}
		\dots & \emptycirc[0.5ex] 
	\end{array} \\ 
	& & \underbrace{\begin{array}{rrrr}
			1 &\fullcirc[0.5ex]  & \dots & \fullcirc[0.5ex] 
	\end{array} }_{\leq \lfloor\frac{q-1}{2}\rfloor}    & \begin{array}{rr}
		\emptycirc[0.5ex] & \dots 
	\end{array}    & \begin{array}{rr}
		\dots & \emptycirc[0.5ex] 
	\end{array} \\ 
	&  &  & \ddots & \vdots \\ 
	\bigzero &  &  &   & \underbrace{\begin{array}{rrrr}
			1 &\fullcirc[0.5ex]  & \dots & \fullcirc[0.5ex] 
	\end{array} }_{\leq \lfloor\frac{q-1}{2}\rfloor} \\ 
\end{array} 
	\right],
}$
}
\vspace{0.2cm}
\\
where $\fullcirc[0.5ex]\in\mathbb{F}_q^*$ and $\emptycirc[0.5ex]\in\mathbb{F}_q$. 
 Then, there exists a $u$-\psmc over $\mathbb{F}_q$ of length $n$ and redundancy $r = 2\kappa$ that has the coefficients at $u$ positions in $\mathbb{F}_q\backslash\{0,q-1\}$.
\end{lem}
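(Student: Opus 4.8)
The plan is to mirror the proof technique of Theorem~\ref{Theorem_1}, but now working from a parity-check matrix $\ve{H}$ rather than a generator matrix, exactly in the spirit of the partitioned-defect-masking construction of \cite{wachterzeh2016codes} and \cite[Theorem~4]{haideralkim2019psmc}. First I would set up the syndrome formulation of masking: given a target word $\ve{w}$ (the data to be written, padded into $\F_q^n$) and the partially stuck positions $\ve{\phi}$ of size $u$, we seek a masking contribution $\ve{d}$ lying in a coset-representative space of dimension $r = 2\kappa$ such that $c_i = w_i + d_i \in \F_q\setminus\{0,q-1\}$ for every $i \in \ve{\phi}$, while $\ve{c}$ still belongs to the code $\ker(\ve{H})$ up to the prescribed redundancy. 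The hypothesis that every $2\kappa \times u$ submatrix $\ve{H}^{(u)}$ has the displayed block-diagonal RRE — each pivot row covering at most $\lfloor\frac{q-1}{2}\rfloor$ of the stuck coordinates — is precisely what lets us decompose the $u$ stuck positions into $\leq 2\kappa$ groups, each of size $\leq \lfloor\frac{q-1}{2}\rfloor$, that can be corrected \emph{independently} by one free field parameter apiece.

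Next I would carry out the per-block argument. Fix one pivot block, say the one spanning stuck positions indexed by a set $B$ with $|B| \leq \lfloor\frac{q-1}{2}\rfloor$; the RRE structure means the corresponding rows act on these coordinates through a row that begins with a leading $1$ (the pivot) followed by arbitrary nonzero entries $\fullcirc[0.5ex]\in\F_q^*$ on the other columns of $B$. Adding a scalar multiple $z_b$ of this pivot row to $\ve{w}$ shifts $w_i \mapsto w_i + z_b H_{\cdot,i}$ for $i \in B$. Because $|B| \leq \lfloor\frac{q-1}{2}\rfloor$, the set of ``forbidden'' values of $z_b$ — those producing $w_i + z_b H_{\cdot,i} \in \{0,q-1\}$ for some $i \in B$ — has size at most $2|B| \leq q-1$, so at least one admissible $z_b \in \F_q$ survives; in fact, arguing as in the proof of Theorem~\ref{Theorem_1}, one can pick $z_b$ so that a whole consecutive pair is avoided, yielding $c_i \in \F_q\setminus\{0,q-1\}$ simultaneously for all $i \in B$. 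Repeating this independently for each of the $\leq 2\kappa$ blocks (the block-diagonal shape of the RRE guarantees that adjusting one block does not disturb the already-fixed coordinates of another) fixes all $u$ stuck positions, and the total number of free scalars used is at most $2\kappa = r$, which is exactly the declared redundancy.

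Then I would assemble the encoder/decoder: the masking vector $\ve{d}$ is the $\F_q$-combination of the (at most $2\kappa$) pivot rows with the chosen coefficients $z_b$, the stored codeword is $\ve{c} = \ve{w} + \ve{d}$, and the message is recoverable because $\ve{d}$ lies in a fixed $r$-dimensional space whose quotient carries the $q^{n-r}$ — or here $M$ — messages, so there is no rate loss beyond the redundancy $r = 2\kappa$. Finally, since $t = 0$ (this is the masking-only lemma), no error-correction step is needed and the codeword obeys $c_i \geq s_i = 1$ trivially once $c_i \neq 0$; the extra guarantee $c_i \neq q-1$ is what the lemma explicitly claims and is delivered by the consecutive-pair choice above. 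This shows the existence of a $u$-\psmc over $\F_q$ of length $n$, redundancy $r = 2\kappa$, with all $u$ coefficients in $\F_q\setminus\{0,q-1\}$.

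The main obstacle I anticipate is the independence-across-blocks bookkeeping: one has to be careful that the RRE's off-pivot entries (the $\emptycirc[0.5ex]$ columns outside the stuck set and the $\fullcirc[0.5ex]$ columns inside a block) do not couple the blocks, i.e., that adding a multiple of a later pivot row leaves the coordinates controlled by earlier pivot rows in $\ve{\phi}$ untouched — this is exactly the point where the \emph{reduced} row echelon form (zeros above pivots, not merely below) is essential, and I would want to state that dependence explicitly rather than gloss over it. A secondary subtlety is verifying that the counting bound ``at most $2|B|$ forbidden scalars'' genuinely leaves room for a \emph{consecutive} admissible pair and not just a single admissible value; the inequality $2\lfloor\frac{q-1}{2}\rfloor \leq q-1$ is tight when $q$ is odd, so the consecutive-pair claim needs the same small pigeonhole refinement used in Theorem~\ref{Theorem_1}, and I would reuse that argument verbatim.
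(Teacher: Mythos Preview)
Your approach is the same as the paper's: both process the $2\kappa$ pivot blocks sequentially, choosing one scalar $z_i\in\F_q$ per block so that the resulting codeword avoids $\{0,q-1\}$ on that block's stuck coordinates, and both rely on the echelon structure so that later choices do not spoil earlier ones (the paper makes this explicit in~\eqref{ith_block}, absorbing the contribution of $z_0,\dots,z_{i-1}$ before $z_i$ is selected). Your direct count---each of the $|B|\le\lfloor(q-1)/2\rfloor$ columns forbids at most two values of $z_b$, so at most $q-1$ values are excluded and at least one survives---is correct and already delivers $c_i\in\F_q\setminus\{0,q-1\}$.

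Two confusions to clean up. First, the consecutive-pair refinement from Theorem~\ref{Theorem_1} does \emph{not} transfer verbatim: that argument depended on the masking row being all ones, so that the two forbidden scalars per coordinate were themselves adjacent. With arbitrary nonzero $H_{i,j}$ the two forbidden values $-(\text{accum})/H_{i,j}$ and $(q-1-\text{accum})/H_{i,j}$ are generally not consecutive, so the pigeonhole-on-consecutive-pairs step breaks. Fortunately you do not need it---your direct count is already complete. Second, the displayed $\text{RRE}(\ve{H}^{(u)})$ is not block-diagonal: the $\emptycirc$ entries above and to the right of each pivot block are arbitrary elements of $\F_q$, so the blocks cannot be handled independently. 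The processing must be genuinely sequential, with earlier $z$'s feeding into the ``current'' value at later blocks; what guarantees that a later $z_i$ leaves earlier blocks intact is the zeros in the lower-left triangle (ordinary row-echelon property), not the zeros above pivots as your obstacle paragraph states.
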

\begin{proof}
The proof closely resembles that of \cite[Theorem ~7]{wachterzeh2016codes}. To that end, we assume w.l.o.g. that each block length is precisely $\lfloor\frac{q-1}{2}\rfloor$; it is clear that the same principle applies if it is shorter. 
Pertaining to the $i$-th block for $i \in [2\kappa]$, the encoder picks $z_i = v_{b+1} \in\mathbb{F}_q$ in its corresponding $i$-th block to come up with the vector $\ve{z} = (z_0,z_1,\dots,z_{2\kappa-1})$ such that
\begin{align}
   \label{ith_block}
\nonumber &z_i \cdot H_{i,j}+(w_j+z_{i-1} \cdot H_{i-1,j}+\dots+z_0 \cdot H_{0,j})\in\mathcal{F}_i,\\ \nonumber&\text{where } \mathcal{F}_i\subset\mathbb{F}_q\text{ with }\vert\mathcal{F}_i\vert\leq \Big\lfloor\frac{q-1}{2}\Big\rfloor,\forall i\in[2\kappa] \mbox{ and}\\ 
& j\in\left[i\Big\lfloor\frac{q-1}{2}\Big\rfloor,(i+1)\Big\lfloor\frac{q-1}{2}\Big\rfloor\right].
\end{align}
Because there are (at most) $\lfloor\frac{q-1}{2}\rfloor$ constraints in (\ref{ith_block}) and $q$ possible values for $z_i$, there are at least two possible successive values $v_{b}$ and $v_{b+1}$  in each of the $i\in[2\kappa]$ blocks. 
Since $\ve{z}$ is not unique, the encoder first finds $z_0$ that masks the first $0$-th block of size at most $\lfloor\frac{q-1}{2}\rfloor$, then $z_1$ to mask the next block, until selecting $z_{2\kappa-1}$ that deals with the last block while satisfying \eqref{ith_block}. Hence, 
\[(z_0,z_1,\dots,z_{2\kappa-1})\cdot \boldsymbol{H}^{(u)}\]
masks the $i\lfloor\frac{(q-1)}{2}\rfloor$-th to $((i+1)\lfloor\frac{(q-1)}{2}\rfloor-1)$-th partially stuck-at-$1$ cells while satisfying \eqref{masking} in each $i$-th block such that the coefficients at $u$ positions in $\mathbb{F}_q\backslash\{0,q-1\}$.
\newline\indent This principle encompasses all $2\kappa$ blocks of $\lfloor\frac{q-1}{2}\rfloor$ cells and column permutations of $\text{RRE}(\boldsymbol{H}^{(u)})$ are clearly not an issue to this approach.
\end{proof}
Now, by replacing the all-one vector in Construction~\ref{construction_1} with a parity-check matrix $\boldsymbol{H}$ from Lemma~\ref{Lem_2}, we introduce Theorem~\ref{Theorem_3} to be able to mask $u\leq n$ partially \pdmc\unskip-at-1 cells while satisfying \eqref{masking}.
\begin{const}
	\label{construction_2}
	 Let $u \leq n$ and $s_i=1$. Suppose there is an $[n, k, d]_q$ code $\mycode{C}$ with a $k \times n$ generator matrix $\ve{G}$ from Construction~\ref{construction_1} such that $\boldsymbol{H} \in \mathbb{F}^{2\kappa \times n}_q$ from Lemma~\ref{Lem_2} replaces $\ve{G_0}$.
   A PDMC can be obtained from the code $\mycode{C}$ using the encoder and the decoder as presented in \cite[Algorithms~5 and~6]{haideralkim2019psmc}.
   Then the encoder outputs $c_i\in\F_q\backslash \{0,q-1\}$.
\end{const}
\begin{thmmystyle}
    \label{Theorem_3}
  	The coding model in Construction~\ref{construction_2} is an $(n, q^{n-r-2\kappa})$ ($u,t$)-PDMC.
\end{thmmystyle}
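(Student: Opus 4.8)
The plan is to combine the two ingredients that Construction~\ref{construction_2} is built from: the systematic ``identity'' part $\ve{G}_1$ that carries the $n-r-2\kappa$ information symbols, and the masking matrix $\ve{H}$ supplied by Lemma~\ref{Lem_2}, which plays the role that $\ve{G}_0$ played in Construction~\ref{construction_1}. First I would recall the encoding/decoding flow of \cite[Algorithms~5 and~6]{haideralkim2019psmc}: on input a message $\ve{m}\in\F_q^{\,n-r-2\kappa}$, the encoder forms $\ve{w}=\ve{m}\cdot\ve{G}_1\in\F_q^n$, then computes a masking vector $\ve{z}\cdot\ve{H}$ with $\ve{z}\in\F_q^{2\kappa}$ and outputs $\ve{c}=\ve{w}+\ve{z}\cdot\ve{H}$. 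This immediately gives the size claim: the map $\ve{m}\mapsto\ve{c}$ is injective because $\ve{G}_1$ is in systematic form on the first $n-r-1$ coordinates (more precisely on the block of columns carrying $\ve{I}_{n-r-1}$), so the message space has size $q^{\,n-r-2\kappa}$ and the code has the asserted parameter $(n,q^{\,n-r-2\kappa})$.

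Next I would establish the masking property, i.e.\ that one can always choose $\ve{z}$ so that $c_i\in\F_q\setminus\{0,q-1\}$ for every $i\in\ve{\phi}$. Here I apply Lemma~\ref{Lem_2} directly: restrict attention to the $2\kappa\times u$ submatrix $\boldsymbol{H}^{(u)}$ of $\ve{H}$ indexed by the partially defective positions $\ve{\phi}$. By hypothesis (and as in Construction~\ref{construction_2}) this submatrix has the block-$\text{RRE}$ shape required by the lemma, with each block of length $\le\lfloor\frac{q-1}{2}\rfloor$; the lemma's proof then produces, block by block, coordinates $z_0,z_1,\dots,z_{2\kappa-1}$ such that at the $u$ positions in $\ve{\phi}$ the corrupted-by-masking value $w_i+(\ve{z}\cdot\ve{H})_i$ avoids both $0$ and $q-1$. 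Thus $1\le c_i\le q-2$ for all $i\in\ve{\phi}$. Combining this with $\ve{e}\in\{0,1\}^n$, exactly as in the proof of Theorem~\ref{Theorem_1}, we get $c_i+e_j\in\{c_i,c_i+1\}\subseteq\{1,\dots,q-1\}$, so $(c_i+e_j)_{ov}\ge s_i=1$ and the partially stuck constraint \eqref{masking} holds even when an error overlaps a defective position.

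Finally I would argue correctness of decoding. Since every coordinate of $\ve{c}$ lies in $\F_q$ and $\ve{c}=\ve{m}\cdot\ve{G}_1+\ve{z}\cdot\ve{H}$ is an $\F_q$-linear combination of rows of the generator matrix $\ve{G}=\big[\begin{smallmatrix}\ve{G}_1\\ \ve{H}\end{smallmatrix}\big]$ of $\mycode{C}$ (here using that $\ve{H}$ has taken the place of $\ve{G}_0$ so that $\ve{G}$ still generates $\mycode{C}$), $\ve{c}$ is a genuine codeword of the $[n,k,d]_q$ code $\mycode{C}$. The channel outputs $\ve{y}=\ve{c}+\ve{e}$ with $\wt(\ve{e})\le t$; applying the bounded-distance decoder of \cite[Algorithm~6]{haideralkim2019psmc} recovers $\ve{c}$, and then $\ve{m}$ is read off from the systematic coordinates. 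Hence Construction~\ref{construction_2} is an $(n,q^{\,n-r-2\kappa})$ $(u,t)$-PDMC. The main obstacle I anticipate is the bookkeeping needed to see that Lemma~\ref{Lem_2}'s sequential, block-wise choice of $\ve{z}$ still leaves the \emph{non-defective} coordinates of $\ve{c}$ free enough that $\ve{G}_1$'s systematic structure is untouched — i.e.\ that masking does not collide with the information-bearing columns; this is exactly the role of the assumption that the columns of $\ve{G}$ indexed by $\ve{\phi}$ are linearly independent and of the $\text{RRE}$ normalization, and it must be invoked carefully rather than waved through.
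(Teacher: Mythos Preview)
Your proposal is correct and follows essentially the same approach as the paper's proof: masking via Lemma~\ref{Lem_2} to force $c_i\in\F_q\setminus\{0,q-1\}$ at the defective positions, and error correction via the minimum distance of $\mycode{C}$ since $\ve{c}=\ve{m}\cdot\ve{G}_1+\ve{z}\cdot\ve{H}$ is a genuine codeword of $\mycode{C}$. The obstacle you anticipate at the end is not a real one --- the decoder of \cite[Algorithm~6]{haideralkim2019psmc} first recovers the entire codeword $\ve{c}$ (and hence both $\ve{z}$ and $\ve{m}$) using the full-rank generator $\ve{G}$, so no collision between masking and information coordinates arises; the linear-independence hypothesis you invoke belongs to Theorem~\ref{masking_with_prob}, not to Construction~\ref{construction_2}.
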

\begin{proof}
\textbf{Masking:} In order to mask each constituent "block" of $\text{RRE}(\boldsymbol{H}^{(u)})$ following Lemma~\ref{Lem_2} such that $c_i\in\mathbb{F}_q\backslash\{0,q-1\}$,
the encoder picks $z_i \in \mathcal{F}_i \subset \mathbb{F}_q$ to be unequal to $w_i \in \mathbb{F}_q \backslash \mathcal{F}_i $.
Then, $\ve{z} \cdot \H^{(u)}$ masks all $u \leq n$ positions and fulfills \eqref{masking}.

\textbf{Error Correction:}  
We use the full rank matrix $\boldsymbol{G}$ of a code that has enough minimum distance to correct at most $t$ errors.  
\end{proof}
\begin{rem}
By halving the maximum length of each block, the number of rows of RRE($\boldsymbol{H}^{(u)}$) in Lemma~\ref{Lem_2} is at least twice that of \cite[Theorem ~7]{wachterzeh2016codes} for fixed $q$ and $u$. This implies we require double the redundancy for masking while correcting the error in the \emph{overlapping} model compared to the masking only case given in \cite[Theorem ~7]{wachterzeh2016codes}.
\end{rem}
We improve Theorem~\ref{Theorem_3} by introducing Theorem~\ref{Theorem_4} such that we reduce the required redundancy for masking in a similar way given in \cite[Theorem ~8]{wachterzeh2016codes} through Lemma~\ref{lemma_2}. 
\begin{lem}
    \label{lemma_2} (Masking Only)
    Let $u\leq\text{min}\{\lfloor\frac{q-1}{2}\rfloor+d-2,n\}$.
    Then, the existence of a $u$-\psmc over $\mathbb{F}_q$ of length $n$ and redundancy $l \leq 2\kappa$ is guaranteed.
\end{lem}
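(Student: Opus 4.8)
The plan is to mimic the proof of \cite[Theorem~8]{wachterzeh2016codes}, which reduces redundancy by absorbing part of the masking task into the minimum distance of an auxiliary code, and to combine it with the "two consecutive values" refinement already used in Lemma~\ref{Lem_2}. First I would split the $u$ partially \pdmc\unskip-at-1 positions into two groups: a block of at most $d-2$ positions to be handled by a short code $\mycode{C}'$ of minimum distance $d$, and the remaining at most $\lfloor\frac{q-1}{2}\rfloor$ positions to be handled by the RRE-block masking mechanism of Lemma~\ref{Lem_2}. The key observation carried over from \cite{wachterzeh2016codes} is that if a linear code has minimum distance $d$, then any $d-1$ coordinates can be set to prescribed values by adding a suitable codeword; here we only need $d-2$ coordinates forced away from the two forbidden symbols $\{0,q-1\}$, which is why the bound reads $\lfloor\frac{q-1}{2}\rfloor+d-2$ rather than $\lfloor\frac{q-1}{2}\rfloor+d-1$.

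Concretely, the steps are: (i) take a parity-check-type matrix whose first $d-2$ columns (restricted to the relevant positions) can be made the identity by row reduction, exactly as in the RRE picture of Lemma~\ref{Lem_2}, with the remaining block of width at most $\lfloor\frac{q-1}{2}\rfloor$ appended; (ii) for the first group, because the minimum distance is $d$, the masking vector has enough degrees of freedom to independently adjust each of the $d-2$ coordinates, so one picks each entry of the masking contribution to avoid $\{0,q-1\}$ (possible since $q\geq 3$ leaves at least one admissible value per coordinate, and in fact consecutive admissible values exist); (iii) for the second group, invoke Lemma~\ref{Lem_2} verbatim: with at most $\lfloor\frac{q-1}{2}\rfloor$ constraints and $q$ candidate values there are at least two successive admissible values $v_b,v_{b+1}$, and sequential block-by-block masking succeeds; (iv) count redundancy: the first group costs $d-2$ parity symbols absorbed into a code of distance $d$ (length growth as in \cite[Theorem~8]{wachterzeh2016codes}) and the second costs the rows of the RRE block, giving total $l\leq 2\kappa$; (v) note column permutations are harmless, as in Lemma~\ref{Lem_2}.

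The main obstacle I expect is making the "absorb $d-2$ positions into a distance-$d$ code" step rigorous in our setting where the forbidden set has size two rather than one: one must verify that forcing $d-2$ coordinates simultaneously away from $\{0,q-1\}$ (rather than to exact prescribed values) is always feasible and, ideally, that it can be done with a \emph{single} masking codeword that also respects the block structure of the second group. This is where I would lean hardest on the structure of \cite[Theorem~8]{wachterzeh2016codes}: the construction there builds the masking code so that its generator/parity structure decomposes, letting the two groups be treated independently; the size-two forbidden set only shrinks the usable range from $d-1$ to $d-2$ and otherwise does not interfere, since $q-2\geq 1$ guarantees a nonempty (indeed, with consecutive elements, a suitable) choice at every coordinate. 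Once that independence is in place, the redundancy bound $l\leq 2\kappa$ follows by the same accounting as in \cite{wachterzeh2016codes}, and the error-correction layer is inherited from the full-rank generator matrix $\ve{G}$ exactly as in the proof of Theorem~\ref{Theorem_3}.
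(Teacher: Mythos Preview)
Your proposal and the paper both reduce Lemma~\ref{lemma_2} to \cite[Theorem~8]{wachterzeh2016codes} with $q-1$ replaced by $\lfloor\frac{q-1}{2}\rfloor$, so the high-level idea is the same. The execution, however, differs. The paper does not split the $u$ positions into two groups at all: it simply observes that, by the argument of \cite[Theorem~8]{wachterzeh2016codes}, the RRE of $\boldsymbol{H}^{(u)}$ for a parity-check matrix of a distance-$d$ code has every block of length at most $u-d+2\le\lfloor\frac{q-1}{2}\rfloor$, and then applies Lemma~\ref{Lem_2} verbatim to the whole matrix. The sequential block-by-block masking of Lemma~\ref{Lem_2} then handles all interactions automatically, so the ``main obstacle'' you anticipate (making the first $d-2$ choices compatible with the remaining block) never arises in the paper's framing.

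Two minor points worth tightening in your write-up: your redundancy accounting in step~(iv) is off, since $l$ is the number of rows of a single matrix $\boldsymbol{H}$, not a sum of ``$d-2$ parity symbols'' plus ``rows of the RRE block''; the bound $l\le 2\kappa$ follows directly because the distance-$d$ matrix is a special instance of the matrix in Lemma~\ref{Lem_2}. Also, your closing remark about inheriting error correction from $\ve{G}$ is extraneous here---Lemma~\ref{lemma_2} is masking only; the error-correction layer belongs to Theorem~\ref{Theorem_4}.
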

\begin{proof}
According to Lemma~\ref{Lem_2}, $\text{RRE}(\boldsymbol{H}^{(u)})$ has block size $\leq \lfloor\frac{q-1}{2}\rfloor$. Hence, following the proof of \cite[Theorem ~8]{wachterzeh2016codes}, we obtain $u-d+2\leq\lfloor\frac{q-1}{2}\rfloor$.
Therefore, we can mask any $u$ partially \pdmc\unskip-at-1 cells with $\boldsymbol{H} \in \mathbb{F}^{l \times n}_q$ as
a special case of $\boldsymbol{H} \in \mathbb{F}^{2\kappa \times n}_q$ from Lemma~\ref{Lem_2}, where $l \leq 2\kappa$.
\end{proof}
\begin{const}\label{construction_3}
 Let $u\leq\lfloor\frac{q-1}{2}\rfloor+d_0-2$ and $s_i=1$. 
Suppose there is an $[n, k, d]_q$ code $\mycode{C}$ with a $k \times n$ generator matrix $\ve{G}$ of the form stated in Construction~\ref{construction_2}. Let a subcode $\mathcal{C}_0^\perp$ of $\mathcal{C}$ be generated by a systematic parity-check matrix $\boldsymbol{H}_0 \in \mathbb{F}^{l \times n}_q$ that replaces $\boldsymbol{H} \in \mathbb{F}^{2\kappa \times n}_q$, whose dual code is an $[n,n-l,d_0 \geq u-\lfloor\frac{q-1}{2}\rfloor+2]_{q}$ code $\mathcal{C}_0$.
    
A PDMC can be obtained from the code $\mycode{C}$ using the encoder and the decoder as stated in \cite[Algorithm~5]{haideralkim2019psmc} and \cite[Algorithm~6]{haideralkim2019psmc}  respectively, with slight modification in \cite[Algorithm~5]{haideralkim2019psmc} as follows:
   \begin{itemize}
       \item Step 2 finds $\ve{z}$ as shown in the proof of Lemma~\ref{Lem_2}.
   \end{itemize}
   Then the encoder outputs $c_i\in\F_q\backslash \{0,q-1\}$.
\end{const}
\begin{thmmystyle}
    \label{Theorem_4}
   	The coding model in Construction~\ref{construction_3} is an $(n, q^{n-r-l})$ ($u,t$)-PDMC.
\end{thmmystyle}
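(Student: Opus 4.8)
The plan is to combine the two ingredients that Construction~\ref{construction_3} bundles together: the masking subcode from Lemma~\ref{lemma_2}, and the error-correcting generator matrix $\ve{G}$ inherited from Construction~\ref{construction_2}. Since the statement asserts that the scheme is an $(n,q^{n-r-l})$ $(u,t)$-PDMC, I have to verify three things: (i) the message space has the claimed size $q^{n-r-l}$; (ii) the encoder always produces a codeword $\ve{c}$ with $c_i\in\F_q\backslash\{0,q-1\}$ for every $i\in\ve{\phi}$, so that the partially-stuck constraint \eqref{masking} survives any error pattern $\ve{e}\in\{0,1\}^n$; and (iii) the decoder recovers $\ve{m}$ from $\ve{y}=\ve{c}+\ve{e}$ whenever $\mathrm{wt}(\ve{e})\le t$.

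First I would handle the dimension count. The generator matrix $\ve{G}$ has $k=n-r$ rows; the masking redundancy consumes $l$ further rows (the systematic parity-check matrix $\boldsymbol{H}_0\in\mathbb{F}^{l\times n}_q$ replaces the block previously occupied by $2\kappa$ rows in Construction~\ref{construction_2}), so the free message symbols number $k-l=n-r-l$, giving $|\mathcal M|=q^{n-r-l}$ as claimed. Next, for the masking step I would invoke Lemma~\ref{lemma_2}: because $u\le\lfloor\frac{q-1}{2}\rfloor+d_0-2$ and the dual distance satisfies $d_0\ge u-\lfloor\frac{q-1}{2}\rfloor+2$, the hypotheses of that lemma are met, so $\text{RRE}(\boldsymbol{H}_0^{(u)})$ has every block of length $\le\lfloor\frac{q-1}{2}\rfloor$. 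Then, exactly as in the proof of Lemma~\ref{Lem_2}, the encoder processes the blocks one at a time, and in each block it has $q$ candidate values for $z_i$ against at most $\lfloor\frac{q-1}{2}\rfloor$ forbidden values $w_j$, hence at least two admissible values, at least one consecutive pair $(v_b,v_{b+1})$; choosing $z_i\leftarrow -v_{b+1}$ (as in the proof of Theorem~\ref{Theorem_1}) forces $c_i=w_i-v_b-1$ with $w_i-v_b\notin\{0,1\}$, so $c_i\in\F_q\backslash\{0,q-1\}$, i.e. $1\le c_i\le q-2$. Adding $e_i\in\{0,1\}$ then keeps $c_i+e_i\in\{1,\dots,q-1\}$, so \eqref{masking} holds.

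For error correction I would observe that $\ve{c}$, being of the form $\ve{m}\cdot\ve{G}_1+\ve{z}\cdot\boldsymbol{H}_0$, still lies in the code $\mycode{C}$ generated by the full-rank $\ve{G}$ (this is the point of requiring $\mathcal{C}_0^\perp\subseteq\mathcal C$: the masking vector is a codeword of $\mathcal C$), and $\mycode{C}$ has minimum distance $d$ large enough to correct $t$ errors. Hence \cite[Algorithm~6]{haideralkim2019psmc} applied to $\ve{y}=\ve{c}+\ve{e}$ recovers $\ve{c}$, from which the systematic structure of $\ve{G}$ yields $\ve{m}$. I expect the main obstacle — or at least the point most in need of care — to be checking that the masking and error-correction requirements are mutually consistent: one must ensure that after the block-wise choice of $\ve{z}$ the resulting $\ve{c}$ still decodes correctly, which relies on $\mathcal C_0=(\mathcal C_0^\perp)^\perp$ sitting inside $\mathcal C$ so that the added masking word does not move $\ve{c}$ outside $\mycode{C}$, and on the minimum distance $d$ of $\mycode{C}$ being unaffected by the particular choice of $\boldsymbol{H}_0$; everything else is a routine concatenation of the arguments already established in Lemma~\ref{lemma_2}, Lemma~\ref{Lem_2}, and Theorem~\ref{Theorem_1}.
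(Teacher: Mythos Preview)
Your proposal is correct and takes essentially the same approach as the paper: invoke Lemma~\ref{lemma_2} to guarantee the block structure of $\text{RRE}(\boldsymbol{H}_0^{(u)})$, then carry out the block-by-block choice of $z_i$ from Lemma~\ref{Lem_2} to force $c_i\in\F_q\setminus\{0,q-1\}$, and defer error correction to Theorem~\ref{Theorem_3}. Your write-up is more explicit than the paper's very terse proof (you add the dimension count $q^{n-r-l}$ and spell out why $\ve{c}\in\mycode{C}$ via $\mathcal{C}_0^\perp\subseteq\mycode{C}$), but the skeleton is identical; one cosmetic slip is that the paper sets $z_i=v_{b+1}$ in the block setting (following Lemma~\ref{Lem_2}) rather than $z_i=-v_{b+1}$ as in Theorem~\ref{Theorem_1}.
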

\begin{proof}
For masking, we use $\boldsymbol{H}_0 \in \F_q^{l\times n}$ in Lemma~\ref{lemma_2}.
As pointed out in the proof of Lemma~\ref{Lem_2}, the encoder chooses $z_i = v_{b+1}$ for each of the $i\in[l]$ blocks within $\boldsymbol{H}_0$ such that (\ref{ith_block}) is fulfilled. Therefore, all $u$ partially \pdmc\unskip-at-1 cells can successfully be masked and the output $c_i\in\F_q\backslash \{0,q-1\}$ for $i \in \ve{\phi}$.
The error correction part of the proof follows Theorem~\ref{Theorem_3}.
\end{proof}
\section{Correcting $(u+t)$ Errors}
In the following section, we provide a simple proposition 
to prove that masking $u$ cells directly by the error correction capability of the code (very uncomplicated) works for a few code parameters as this was not clearly mentioned in \cite{haideralkim2019psmc}.
\begin{prop}
    \label{Proposition_3}
    Let $u,t, n, \,\mbox{and}\,q$ be positive integers.
    Let $
    \{0\leq u,t \leq n\mid u+t < n\}$. Let the encoder introduce artificial errors in $\ve{\phi}$ such that $(c_i+e_j)_{ov} \geq s_i$ for $i \in \ve{\phi}$ and $j \in \Psi$.
    Assume there is an $[n,k,d\geq 2(t+u)+1]_q$ code $\mathcal{C}$. 
    Then there is an $(n, q^k)$ $(u,t)$-PDMC. 
\end{prop}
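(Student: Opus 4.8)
The plan is to treat, at the partially defective positions, the (at most $u$) symbol changes the encoder makes to repair the codeword as if they were additional channel errors, and then rely on a bounded--distance decoder of $\mycode{C}$ to undo them together with the $\le t$ genuine errors in a single step.

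Concretely, I would fix a full--row--rank generator matrix $\ve{G}\in\mathbb{F}_q^{k\times n}$ of $\mycode{C}$ and encode a message $\ve{m}\in\mathbb{F}_q^{k}$ as $\tilde{\ve{c}}=\ve{m}\cdot\ve{G}$; distinct messages give distinct codewords, so the message space has size $M=q^{k}$. The encoder then forms the word $\ve{c}$ actually stored by overwriting, at each $i\in\ve{\phi}$ with $\tilde c_i\notin\mathbb{F}_q\setminus\{0,q-1\}$, the entry $\tilde c_i$ by an arbitrary element of $\mathbb{F}_q\setminus\{0,q-1\}$ (nonempty since $q\ge 3$), and leaving every other coordinate unchanged. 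Setting $\ve{a}:=\ve{c}-\tilde{\ve{c}}$ for the artificial error, one has $\mathrm{supp}(\ve{a})\subseteq\ve{\phi}$, hence $\mathrm{wt}(\ve{a})\le u$, and $c_i\in\mathbb{F}_q\setminus\{0,q-1\}$ for all $i\in\ve{\phi}$; consequently $(c_i+e_j)\bmod q\neq 0$ for every $i\in\ve{\phi}$ and $e_j\in\{0,1\}$, i.e.\ $(c_i+e_j)_{ov}\ge 1=s_i$, so $\ve{c}$ meets the partially--defective constraint required of a $(u,t)$-PDMC codeword even under an overlapping error. (For general levels $s_i\le q-2$ the same step works with the safe set $\{s_i,\dots,q-2\}$.)

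On the reading side the channel adds $\ve{e}$ with $\mathrm{wt}(\ve{e})\le t$, and the decoder receives $\ve{y}=\ve{c}+\ve{e}=\tilde{\ve{c}}+(\ve{a}+\ve{e})$, where $\mathrm{wt}(\ve{a}+\ve{e})\le\mathrm{wt}(\ve{a})+\mathrm{wt}(\ve{e})\le u+t$ by the triangle inequality (an overlap $\ve{\phi}\cap\Psi\neq\emptyset$ can only lower this weight). Since $d\ge 2(u+t)+1$, the codeword $\tilde{\ve{c}}$ is the unique codeword of $\mycode{C}$ within Hamming distance $u+t$ of $\ve{y}$, so any bounded--distance decoder for $\mycode{C}$ returns $\tilde{\ve{c}}$, and $\ve{m}$ is then recovered from $\ve{m}\cdot\ve{G}=\tilde{\ve{c}}$ using the full rank of $\ve{G}$; this exhibits an $(n,q^{k})$ $(u,t)$-PDMC. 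I would note in passing that $u+t<n$ is the natural consistency hypothesis and is in fact forced, as $2(u+t)+1\le d\le n$. The one point that genuinely needs care is that $\ve{c}$ is in general \emph{not} a codeword of $\mycode{C}$, so the decoding argument has to run through the nearest codeword $\tilde{\ve{c}}$ and the combined weight $\mathrm{wt}(\ve{a}+\ve{e})\le u+t$ --- which is exactly what the inflated distance requirement $d\ge 2(u+t)+1$ pays for; the remaining items (existence of a safe symbol, injectivity of the encoder, uniqueness of bounded--distance decoding) are routine.
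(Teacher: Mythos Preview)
Your proposal is correct and follows essentially the same approach as the paper: treat the encoder's modifications at the $u$ partially defective positions as artificial errors on top of a genuine codeword $\tilde{\ve{c}}\in\mycode{C}$, bound the combined error weight by $u+t$, and invoke bounded-distance decoding via $d\ge 2(u+t)+1$. The only cosmetic difference is that the paper simply sets each $c_i$, $i\in\ve{\phi}$, to the value $1$ (which already lies in $\mathbb{F}_q\setminus\{0,q-1\}$ for $q\ge 3$), whereas you overwrite to an arbitrary safe symbol and only where needed; your version is more carefully written but the idea is identical.
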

\begin{proof}
Since the encoder deliberately introduces errors in $\ve{\phi}$, i.e., setting the values in each $i \in \ve{\phi}$ to $1$, all $\ve{\phi}$ locations are masked (non-zero). Now, as code $\mathcal{C}$ has enough minimum distance $d\geq 2(t+u)+1$, it can correct the channel errors $t$ and $u$ artificial errors.  
\end{proof}
Section~\ref{Comparisons} compares Proposition~\ref{Proposition_3} with the complicated 
error correcting and masking code construction suggested in \cite[Theorem~1]{haideralkim2019psmc} for a few code parameters.
\section{Comparison}\label{Comparisons}
Consider a $[114,8,79]_7$ BCH code of rate $0.0701$. 
This code can correct up to a total of $u+t=39$ errors by Proposition~\ref{Proposition_3}.
On the contrary, application of \cite[Theorem~1]{haideralkim2019psmc} using a $[114,9,67]_7$ BCH code of rate $0.07889$ for $u=6$ achieves the same previous rate as follows. 
Since we require a single symbol for masking $u=6$ by \cite[Theorem~1]{haideralkim2019psmc}, that is
$\frac{1}{114}=0.00879$ for this code, we obtain the rate $0.07889 - 0.00877 = 0.0701$, which is exactly the rate of the former BCH code, and we are also able to correct $u+t = 6+33 =39$ errors. 
We conclude that directly applying Proposition~\ref{Proposition_3} could be 
as effective as employing \cite[Theorem~1]{haideralkim2019psmc}.  
However, application of Proposition~\ref{Proposition_3} for most code parameters is not practical as the required check symbols are considerably large to obtain a code $\mathcal{C}$ with enough minimum distance $d\geq 2(t+u)+1$ to correct both $t+u$ errors. 
For $u \geq q$ and by the prior argument, we infer also that Proposition~\ref{Proposition_3} is much worse than \cite[Theorem~4]{haideralkim2019psmc} and Construction~\ref{construction_2}. 
\section{Conclusion}
In this work, we have first derived the probability of intersections between $t$ errors and $u$ cells.
Then we proved that the probability of such overlaps resulting in zeros in the positions of the partial defects is non-trivial (i.e., for $q=3$, one-third of the overlapping cases leads to zeros in the partially defective coordinates). Hence, we have modified the constructions given in \cite{haideralkim2019psmc} such that the encoded vector never attains $0$ or $q-1$ at the partially defective positions. 
Such a modification ensures that if the support of the error vector $\{\ve{e}\in \{0,1\}^n \mid wt(\ve{e}) \leq t\}$ occurs at the partially defective locations, the resulting vector's coefficients at these positions are non-zero. As two values from the set $[q]$ in the output vector are forbidden, this modification either costs more redundancy for masking while considering overlapping errors or handles less masked cells.
We also showed that it is possible, for a few code parameters, to directly coincide with the partially defective cells by the error-correcting code capability instead of sophisticated error-correcting and defect-masking schemes. 
\bibliographystyle{IEEEtran}

\end{document}